 \mathchardef\ordinarycolon\mathcode`\:
\begin{document}

\title{Efficient Two-Stage Group Testing Algorithms for Genetic Screening
\thanks{Part of this work has been presented at  the workshop ICALP2011GT: Algorithms and Data Structures for selection, identification and encoding. Group testing, compressed sensing, multi access communication and more, 3rd July 2011, Zurich. The author gratefully acknowledges support of his work by the Deutsche Forschungsgemeinschaft (DFG) via a Heisenberg grant (Hu954/4) and a Heinz Maier-Leibnitz Prize grant (Hu954/5).}
}

\titlerunning{Efficient Two-Stage Group Testing Algorithms for Genetic Screening}

\author{Michael Huber}

\authorrunning{Michael Huber}

\institute{M. Huber \at
              Wilhelm-Schickard-Institute for Computer Science\\ University of Tuebingen\\
              Sand~13, 72076~Tuebingen, Germany\\
              \email{michael.huber@uni-tuebingen.de}\\
              Phone +49 7071 2977173\\
              Fax         +49 7071 295061 
          }

\date{Received: November 15, 2011} 

\maketitle
\begin{abstract}
Efficient two-stage group testing algorithms that are particularly suited for rapid and less-expensive DNA library screening and other large scale biological group testing efforts are investigated in this paper. The main focus is on novel combinatorial constructions in order to minimize the number of individual tests at the second stage of a two-stage disjunctive testing procedure. Building on recent work by  Levenshtein (2003) and Tonchev (2008), several new infinite classes of such combinatorial designs are presented.
\keywords{Group testing algorithm \and two-stage disjunctive testing \and genetic screening \and DNA library
 \and combinatorial design}
\end{abstract}

\section{Introduction}\label{Introduction}

With the completion of genome sequencing projects such as the Human Genome Project, efficient screening of DNA clones in very large genome sequence databases has become an important issue pertaining to the study of gene functions. Very useful tools for DNA library screening are \emph{group testing algorithms}. The general group testing problem (cf.~\cite{Du99,Du06}) can be basically stated as follows: a large population $X$ of $v$ items that contains a small set of \emph{defective}, or \emph{positive}, items shall be tested in order to identify the defective items efficiently. For this, the items are pooled together for testing. The \emph{group test} reports ``yes'' if for a subset $S \subseteq X$ one or more defective items have been found, and ``no'' otherwise. Using a number of group tests, the task of determining which items are defective shall be accomplished. Various objectives could be considered for group testing, e.g., minimizing the number of group tests, limiting the number of pool sizes, or tolerating a few errors. In what follows, we will focus on the first issue.

Of particular practical importance in DNA library screening are one- or two-stage group testing procedures (cf.~\cite[p.\,371]{Knill95}):

\begin{quote}\small

``[...] The technicians who implement the pooling strategies generally dislike even the 3-stage strategies that are often used. Thus the most commonly used strategies for pooling libraries of clones rely on a fixed but reasonably small set on non-singleton pools. The pools are either tested all at once or in a small number of stages (usually at most 2) where the previous stage determines which pools to test in the next stage. The potential positives are then inferred and confirmed by testing of individual clones [...].''

\end{quote}

In practice, genetic screening based on group testing is often followed by a validation step in which all relevant samples are tested again in `conventional' ways, thus adding the tests required for the second stage of the group testing design may be considered as part of this validation phase. As a consequence, minimizing the number of tests in the second phase is highly desired for DNA library and other large scale biological two-stage group testing procedures.

\emph{Disjunctive} testing relies on Boolean operations. It aims to find the set of defective items by reconstructing its binary $(0,1)$-incidence vector $\mathbf{x} =(x_1,\ldots,x_v)$, where $x_i=1$ if the $i$th item is defective (positive), and $x_i=0$ otherwise.
Levenshtein~\cite{Lev03} (cf. also~\cite{Ton08}) has employed a two-stage disjunctive testing algorithm in order to reconstruct the vector $\mathbf{x}$: At Stage~1, disjunctive tests are conducted which are determined by the rows of a binary matrix that is comparable to a parity-check matrix of a binary linear code. After determining what items are positive, negative or unresolved, individual tests are performed at Stage~2 in order to determine which of the remaining unresolved items are positive or negative.

Particularly important with respect to the research objectives in this paper, Levenshtein~\cite{Lev03}  derived a combinatorial lower bound on the minimum number of individual tests at Stage~2. He showed that this bound is met with equality if and only if a Steiner $t$-design exists which has the additional property that the blocks have two sizes differing by one (i.e., $k$ and $k+1$; cf. Section~\ref{tools}). Relying on this result, Tonchev~\cite{Ton08} gave a straightforward construction method for such designs, based on specific balanced incomplete block designs (BIBDs). All these results are summarized in Sections~\ref{LevAlgo} and~\ref{tools}.

In this paper, we build on the work by Levenshtein and Tonchev and  construct several further infinite classes of Steiner designs with the desired additional property. Our constructions involve, inter alia, resolvable BIBDs, cyclically resolvable cyclic BIBDs, $2$-resolvable Steiner quadruple systems, and a large set of Steiner triple systems. As a result, we obtain efficient two-stage disjunctive group testing algorithms suited for faster and less-expensive genetic screening.

The paper is organized as follows: Section~\ref{LevAlgo} presents Levenshtein's two-stage disjunctive group testing algorithm.
Section~\ref{tools} introduces background material on combinatorial structures that is important for our further purposes and gives an overview of the previous combinatorial constructions due to Tonchev. Section~\ref{new} is devoted to our new combinatorial constructions. The paper is concluded in Section~\ref{Concl}. 


\section{Levenshtein's Two-Stage Disjunctive Group Testing Algorithm}\label{LevAlgo}

We describe Levenshtein's two-stage disjunctive group testing procedure and its connection with certain combinatorial designs (cf.~\cite{Lev03}, see also~\cite{Ton08}).

\emph{Disjunctive} group testing relies on Boolean operations in order to solve the problem of reconstructing an unknown binary vector $\mathbf{x}$ of length $v$  using the pool testing procedure~\cite{Du99}. Particularly important for our concerns, Levenshtein has employed a two-stage disjunctive testing algorithm to reconstruct the vector $\mathbf{x}=(x_1,\ldots,x_v)$: At Stage~1, disjunctive tests are conducted which are determined by the rows $\mathbf{h}_i=(h_{i,1},\ldots,h_{i,v})$ of a binary $u \times v$ matrix $H$ that is comparable to a parity-check matrix of a binary linear code. A \emph{syndrome} $\mathbf{s} =(s_1,\ldots,s_u)$ is calculated, where $s_i$ is defined by
\[s_i=\bigvee_{j=1}^v x_j \, \& \, h_{i,j}, \quad i=1,\ldots,u,\]
where $\vee$ and $\&$ denote the logical operations of disjunction and conjunction.
The system of $u$ logical equations with $v$ Boolean variables for reconstructing the vector $\mathbf{x}=(x_1,\ldots,x_v)$ does not have a unique solution in general. After determining what items are positive, negative or unresolved, individual tests are performed at Stage~2 in order to determine which of the remaining unresolved items are positive or negative. Formally, let $X=\{1, 2, \ldots,  v\}.$ Given a syndrom $\mathbf{s} =(s_1,\ldots,s_u)$, let
\[Q(H,\mathbf{s})=\{\mathbf{x} \in \{0,1\}^v : \mathbf{s}= \mathbf{x}H^T\}\]
denote the set of all vectors $\mathbf{x}$ having syndrom equal to $\mathbf{s}$.
For given $H$ and $\mathbf{s}$, an item $j \in  X$ is \emph{positive} or \emph{negative}, respectively, if the $j$th component of all vectors of  $Q(H,\mathbf{s})$ is $1$ (active) or $0$ (inactive), respectively. 
All remaining $\mathfrak{u}(H,\mathbf{x})$ items $i \in  X$ are called \emph{unresolved}.  Let $h_i$ denote the set of indices of the ones in the test vector $\mathbf{h}_i$, called a \emph{pool}. It can be easily seen that an item $j$ is negative if and only if there exists a pool $h_i$ such that $j \in h_i$ (or $h_{i,j}=1$) and $s_i=0$. If for an item $j$ there exists a pool $h_i$ such that $s_i=1$ and $h_i$ contains $j$ and all other of its indices of the ones (if existent) are negative, then the item $j$ is positive. In the remaining cases either all pools do not contain $j$ or every pool $h_i$, such that $s_i=1$ and $j \in h_i$, contains also at least one more item that is not negative, in which cases the item $j$ is unresolved. An example is as follows (cf.~\cite{Lev03}).

\begin{example}\label{Hs}
Consider the $4 \times 6$ test matrix $H$ and the syndrom  $\mathbf{s}=(1,0,1,1)$:
\begin{center}
\begin{tabular}{|c c c c c c |c|}
  \hline
  $x_1$ & $x_2$ & $x_3$ & $x_4$ & $x_5$ & $x_6$ & $\mathbf{s}$\\
  \hline
  $1$ & $1$ & $1$ & $0$ & $0$ & $0$ & $1$\\
  $1$ & $0$ & $0$ & $1$ & $1$ & $0$ & $0$\\  
  $0$ & $1$ & $0$ & $0$ & $1$ & $1$ & $1$\\
  $0$ & $0$ & $0$ & $1$ & $1$ & $1$ & $1$\\
   \hline
\end{tabular}
\end{center}
Here, items $1$, $4$ and $5$ are negative, item $6$ is positive, and items $2$ and $3$ are unresolved. Moreover, 
$Q(H,\mathbf{s})=\{(0,1,0,0,0,1),(0,0,1,0,0,1),(0,1,1,0,0,1)\}$.
\end{example}

\subsection{Minimum Number of Tests}

Assuming that the choice of $\mathbf{x} \in \{0,1\}^v$ is governed by a Bernoulli probability distribution $P$ with parameter $p$, $0<p<1$, the efficiency of Levenshtein's two-stage testing algorithm is characterized by the average number
\[E(H,p)=u+ \sum_{\mathbf{x} \in \{0,1\}^v}\mathfrak{u}(H,\mathbf{x})P(\mathbf{x})\]
of tests used to determine an unknown $\mathbf{x} \in \{0,1\}^v$. The resulting optimization problem is to find the minimum average number
\[E(v,p)=\min E(H,p),\]
where the minimum is taken over all test matrices $H$ with $v$ columns and any number $u \geq 1$ of rows. 

Concerning the minimum number of individual tests at the second stage, Levenshtein~\cite{Lev03}  considered the following setting.
Let $X(v)$ be the set of all $2^v$ subsets of the set $X=\{1, 2, \ldots,  v\}$ and $X_t(v)=\{x \in X(v): \left| x \right|=t\}$.
For a fixed $t$ ($1 \leq t \leq v$) consider a covering operator $F : X_t(v) \rightarrow X(v)$ such that $x \subseteq F(x)$ for any
$x \in X_t(v)$. Define  
\[\mathcal{D} = \{F(x): x \in  X_t(v)\}.\]
For any $T$, $1\leq T \leq {v \choose t}$, consider the decreasing continuous
function $g_t(T) = k + \frac{k + 1}{t}(1- \alpha)$  where $k$ and $\alpha$ are uniquely determined by the conditions
$T{k \choose t}= \alpha {v \choose t}$, $k\in \{t,\ldots,v\}$, and $1- \frac{t}{k+1} < \alpha \leq 1$.
Using averaging and linear programing,  Levenshtein~\cite{Lev03} proved the following inequality:

\begin{theorem}[Levenshtein, 2003]\label{Lev}

\[\frac{1}{{v \choose t}} \sum_{x \in X_t(v)} \left| F(x) \right| \geq g_t(\left| \mathcal{D} \right|),\]
and  the bound is met with equality if and only if $\mathcal{D}$ is a Steiner \mbox{$t$-$(v,\{k,k+1\},1)$} design.
\end{theorem}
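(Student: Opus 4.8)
The plan is to recast the left-hand side $\frac{1}{\binom{v}{t}}\sum_{x\in X_t(v)}|F(x)|$ as the objective value of a small linear program whose variables record, for each admissible block size, how much of the ``$t$-set budget'' $\binom{v}{t}$ is assigned to blocks of that size, and then to determine the minimum of this LP exactly.

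First I would do the combinatorial bookkeeping. Write $\mathcal{D}=\{B_1,\dots,B_T\}$ with $T=|\mathcal{D}|$, and for each block put $m_j=\bigl|\{x\in X_t(v):F(x)=B_j\}\bigr|$. Since every $x$ has a \emph{unique} image and $x\subseteq F(x)$, each such $x$ is a $t$-subset of $B_j$, so $\sum_j m_j=\binom{v}{t}$, $1\le m_j\le\binom{|B_j|}{t}$, and $\sum_{x}|F(x)|=\sum_j m_j|B_j|$. Grouping blocks by size, for $s\in\{t,\dots,v\}$ set $w_s=\sum_{|B_j|=s}m_j$ and $n_s=|\{j:|B_j|=s\}|$; then $\sum_s w_s=\binom{v}{t}$, $\sum_s n_s=T$, and $w_s\le n_s\binom{s}{t}$, whence $\sum_s w_s/\binom{s}{t}\le T$, while $\sum_x|F(x)|=\sum_s s\,w_s$. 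So it suffices to show that the minimum of $\sum_s s\,w_s$ subject to $w_s\ge0$, $\sum_s w_s=\binom{v}{t}$ and $\sum_s w_s/\binom{s}{t}\le T$ equals $g_t(T)\binom{v}{t}$.

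The heart of the argument is solving this LP, and this is where I expect the main difficulty. The key analytic fact is that $s\mapsto 1/\binom{s}{t}$ is convex on $\{t,\dots,v\}$ (the falling factorial $s(s-1)\cdots(s-t+1)$ is log-concave, so its reciprocal is log-convex, hence convex), so the points $\bigl(\binom{v}{t}/\binom{s}{t},\,s\bigr)$ are in convex position and the polygonal chain through consecutive sizes is their lower convex hull. Consequently an optimal $(w_s)$ is supported on two consecutive sizes $k,k+1$ with the last constraint tight, where $k$ is precisely the index singled out by $T\binom{k}{t}\le\binom{v}{t}<T\binom{k+1}{t}$ — equivalently $T\binom{k}{t}=\alpha\binom{v}{t}$ with $1-\tfrac{t}{k+1}<\alpha\le1$, since $\binom{k}{t}=\tfrac{k+1-t}{k+1}\binom{k+1}{t}$. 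Solving the resulting $2\times2$ system and simplifying with the same identity gives $w_{k+1}=\tfrac{k+1}{t}(1-\alpha)\binom{v}{t}$, and hence minimum value $\bigl(k+\tfrac{k+1}{t}(1-\alpha)\bigr)\binom{v}{t}=g_t(|\mathcal{D}|)\binom{v}{t}$; dividing by $\binom{v}{t}$ yields the inequality. (The boundary case $\alpha=1$ is the degenerate one where a single size $k$ suffices.)

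Finally I would trace the equality condition back through the chain. Equality forces $(w_s)$ to be the LP optimum, so $w_s=0$ for $s\notin\{k,k+1\}$; since $\mathcal{D}$ is the image of $F$, no block has $m_j=0$, so every block of $\mathcal{D}$ has size $k$ or $k+1$. Tightness of the last LP constraint together with $w_s\le n_s\binom{s}{t}$ forces $w_k=n_k\binom{k}{t}$ and $w_{k+1}=n_{k+1}\binom{k+1}{t}$, hence $m_j=\binom{|B_j|}{t}$ for every block; that is, every $t$-subset of $B_j$ is mapped by $F$ to $B_j$. Combined with the single-valuedness of $F$, this means each $t$-subset of $X$ lies in exactly one block of $\mathcal{D}$, i.e.\ $\mathcal{D}$ is a Steiner $t$-$(v,\{k,k+1\},1)$ design. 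Conversely, given such a design, letting $F(x)$ be the unique block containing $x$ makes every inequality in the chain an equality, using that the block counts satisfy $n_k+n_{k+1}=T$ and $n_k\binom{k}{t}+n_{k+1}\binom{k+1}{t}=\binom{v}{t}$, which is exactly the system whose solution is the LP optimum; thus the bound is attained.
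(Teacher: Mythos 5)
The paper itself gives no proof of Theorem~\ref{Lev} --- it is quoted from Levenshtein (2003) with only the remark that it was obtained ``using averaging and linear programing'' --- and your argument is a correct, self-contained proof along exactly that route: averaging the block sizes against the weights $m_j$, reducing to a two-constraint LP in the size distribution $(w_s)$, and solving it via convexity of $s \mapsto 1/\binom{s}{t}$, with the equality analysis recovering the Steiner $t$-$(v,\{k,k+1\},1)$ condition. The only detail worth making explicit is that the ``only if'' direction needs \emph{strict} convexity (no three points $\bigl(1/\binom{s}{t},s\bigr)$ collinear) to force the optimal support onto $\{k,k+1\}$, which follows from the same log-concavity computation you invoke, so the proof stands.
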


As pointed out in~\cite{Lev03}, one of the main motivations for the above result is to minimize the number of individual tests at the second stage of a two-stage disjunctive group testing algorithm under the condition that the vectors $\mathbf{x}$ are distributed with probabilities $p^{\left| x \right|} (1-p)^{v-\left| x \right|}$ where $x \in X(v)$  denotes the indices of the ones (defective items) in $\mathbf{x}$. 
The bound above implies that the expected number of items that remain unresolved after application in parallel of $u$ pools (any number $u \geq 1$) is not less than
\begin{equation}\label{E1}v \sum_{t=1}^{v} {{v \choose t}} p^t (1-p)^{v-t}2^{-\frac{u}{t}}-vp .
\end{equation}

Relying on the Shannon theorem on the average length of a prefix code, Berger and Levenshtein~\cite{BergLev02} derived the following information theoretic bound for the minimum average number 
\begin{equation}\label{E2}
E(v,p)\geq v \bigg(p \log_2 \frac{1}{p} + (1-p) \log_2 \frac{1}{1-p} \bigg).
\end{equation}
This bound implies that the natural desire to achieve $E(v,p)=o(v)$ as $v \rightarrow \infty$ can be satisfied only if $p \rightarrow 0$ (cf.~\cite{BergLev02}). Note that the bound is valid not only for a two-stage testing algorithm but for any adaptive testing algorithm.

The bound~(\ref{E1}) is asymptotically better than the information theoretic bound~(\ref{E2}) as $v \rightarrow \infty$ when $p \leq c(\ln v /v)$ with any constant $c>0$. Furthermore, by employing random selection to obtain an upper bound to \linebreak $\sum_{\mathbf{x} \in \{0,1\}^v}\mathfrak{u}(H,\mathbf{x})P(\mathbf{x})$, the asymptotic behavior of $E(v,p)$ can be determined up to a positive constant factor as $v \rightarrow \infty$ when $p$ is not to small, i.e., $p> v^{2-\varepsilon}$ with $\varepsilon >0$ arbitrarily small (see~\cite{BergLev02,Lev03}).


\section{Combinatorial Structures and Tonchev's Constructions}\label{tools}

We give some standard notations of combinatorial structures that are important for our further purposes.
Let $X$ be a set of $v$ elements and $\mathcal{B}$ a collection of \mbox{$k$-subsets} of $X$. The elements of $X$ and $\mathcal{B}$ are called \emph{points} and \emph{blocks}, respectively. An ordered pair \mbox{$\mathcal{D}=(X,\mathcal{B})$} is defined to be a \mbox{\emph{$t$-$(v,k,\lambda)$ design}} if each \mbox{$t$-subset} of $X$ is contained in exactly $\lambda$ blocks. For historical reasons, a \mbox{$t$-$(v,k,\lambda)$ design} with $\lambda =1$ is called a \emph{Steiner \mbox{$t$-design}} or a \emph{Steiner system}.  Well-known examples are \emph{Steiner triple systems} ($t=2$, $k=3$) and \emph{Steiner quadruple systems} ($t=3$, $k=4$).
A \emph{$2$-design} is commonly called a \emph{balanced incomplete block design}, and denoted by $\mbox{BIBD}(v,k,\lambda)$. 
It can be easily seen that in a \mbox{$t$-$(v,k,\lambda)$ design} each point is contained in the same number $r$ of blocks, and for the total number $b$ of blocks, the parameters of a \mbox{$t$-$(v,k,\lambda)$ design} satisfy the relations
\[bk=vr \quad \text{and} \quad r(k-1) = \lambda \frac{{v-2 \choose t-2}}{{k-2 \choose t-2}}(v-1) \quad \text{for} \quad t \geq 2.\]

\begin{example}\label{STS9}
Take as point-set \[X=\{1,2,3,4,5,6,7,8,9\}\] and as block-set
\[\mathcal{B}=\{\{1,2,3\},\{4,5,6\},\{7,8,9\},\{1,4,7\},\{2,5,8\},\{ 3,6,9\},\]
\[ \qquad \{1,5,9\},\{2,6,7\},\{3,4,8\},\{1,6,8\},\{2,4,9\},\{3,5,7\}\}.\]
This gives a $\mbox{BIBD}(9,3,1)$, i.e., the unique affine plane of order $3$. It can be constructed
as illustrated in Figure~\ref{AG23}.
\end{example}

\bigskip

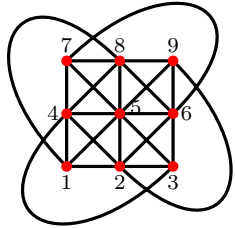
\begin{figure}[htp]
\centering
\begin{tikzpicture}[scale=0.7,very thick]

\draw[black]

(0,0) -- (0,2)

(0,0) -- (2,0)

(2,0) -- (2,2)

 (0,2) -- (2,2)

 (0,1) -- (2,1)

(1,0) -- (1,2)

(0,0) -- (2,2)

(2,0) -- (0,2)

(0,1) -- (1,0)

(2,1) -- (1,2)

(0,1) -- (1,2)

(1,0) -- (2,1)

(2,1) .. controls (4,3) and (2,4) .. (0,2)

(1,2) .. controls (-1,4) and (-2,2) .. (0,0)

(0,1) .. controls (-2,-1) and (0,-2) .. (2,0)

(1,0) .. controls (3,-2) and (4,0) .. (2,2);

\filldraw [draw=red!100,fill=red!100]

(0,0) circle (2pt) (0,-0.3) node {1}

(0,1) circle (2pt) (-0.25,1) node {4}

(0,2) circle(2pt) (0,2.3) node {7}

(1,0) circle (2pt) (1,-0.3) node {2}

(1,1) circle (2pt) (1.3,1.12) node {5}

(1,2) circle(2pt) (1,2.3) node {8}

(2,0) circle (2pt) (2,-0.3) node {3}

(2,1) circle(2pt) (2.25,1) node {6}

(2,2) circle (2pt) (2,2.3) node {9};

\end{tikzpicture}
\caption{Construction of a $\mbox{BIBD}(9,3,1)$.}\label{AG23}
\end{figure}

In this paper, we primarily focus on BIBDs. Let $(X,\mathcal{B})$ be a $\mbox{BIBD}(v,k,\lambda)$, and let $\sigma$ be a permutation on $X$. For a block $B=\{b_1,\ldots,b_k\} \in \mathcal{B}$, define $B^\sigma :=\{b^\sigma_1,\ldots,b^\sigma_k\}$. If $\mathcal{B}^\sigma:=\{B^\sigma : B \in \mathcal{B}\}=\mathcal{B}$, then $\sigma$ is called an \emph{automorphism} of $(X,\mathcal{B})$. If there exists an automorphism $\sigma$ of order $v$, then the BIBD is called \emph{cyclic}. In this case, the point-set $X$ can be identified with $\mathbb{Z}_v$, the set of integers modulo $v$, and $\sigma$ can be represented by $\sigma : i \rightarrow i + 1$ (mod $v$).

For a block $B=\{b_1, \ldots, b_k\}$ in a cyclic $\mbox{BIBD}(v,k,\lambda)$, the set $B + i := \{b_1+i$ (mod $v), \ldots, b_k+i$ (mod $v)\}$ for $i \in \mathbb{Z}_v$ is called a \emph{translate} of $B$, and the set of all distinct translates of $B$ is called the \emph{orbit} containing $B$. If the length of an orbit is $v$, then the orbit is said to be \emph{full}, otherwise \emph{short}. A block chosen arbitrarily from an orbit is called a \emph{base block} (or \emph{starter block}). If $k$ divides $v$, then the orbit containing the block
\[B=\bigg\{0, \frac{v}{k}, 2 \frac{v}{k}, \ldots ,(k-1)\frac{v}{k}\bigg\}\]
is called a \emph{regular short orbit}. For a cyclic $\mbox{BIBD}(v,k,1)$ to exist, a necessary condition is $v \equiv 1$ or $k$ (mod $k(k-1)$). When $v \equiv 1$ (mod $k(k-1)$) all orbits are full, whereas if $v \equiv k$ (mod $k(k-1)$) one orbit is the regular short orbit and the remaining orbits are full.

A BIBD is said to be \emph{resolvable}, and denoted by $\mbox{RBIBD}(v,k,\lambda)$, if the block-set $\mathcal{B}$ can be partitioned into classes $\mathcal{R}_1,\ldots , \mathcal{R}_r$ such that every point of $X$ is contained in exactly one block of each class. The classes $\mathcal{R}_i$ are called \emph{resolution} (or \emph{parallel}) \emph{classes}. 
A simple example is as follows.

\begin{example}
The $\mbox{BIBD}(9,3,1)$ from Example~\ref{STS9} is also an $\mbox{RBIBD}(9,3,1)$. Each row is a resolution class.

\[\begin{array}{|c|ccc|}
\hline
\mathcal{R}_1 & \{1,2,3\} & \{4,5,6\} & \{7,8,9\}\\
\mathcal{R}_2 & \{1,4,7\} & \{2,5,8\} & \{3,6,9\}\\
\mathcal{R}_3 & \{1,5,9\} & \{2,6,7\} & \{3,4,8\}\\
\mathcal{R}_4 & \{1,6,8\} & \{2,4,9\} & \{3,5,7\}\\
\hline
\end{array}\]
\end{example}

Generally, an $\mbox{RBIBD}(k^2,k,1)$ is equivalent to an affine plane of order $k$. An $\mbox{RBIBD}(v,3,1)$ is called a \emph{Kirkman triple system}. Necessary conditions for the existence of an $\mbox{RBIBD}(v,k,\lambda)$ are $\lambda(v-1) \equiv 0$ (mod $(k-1)$) and $v \equiv 0$ (mod $k$).

If $\mathcal{R}_i$ is a resolution class, define $\mathcal{R}^\sigma_i:=\{B^\sigma : B \in \mathcal{R}_i\}$. An RBIBD is called \emph{cyclically resolvable} if it has a non-trivial automorphism $\sigma$ of order $v$ that preserves its resolution $\{\mathcal{R}_1,\ldots \mathcal{R}_r\}$, i.e., $\{\mathcal{R}^\sigma_1,\ldots \mathcal{R}^\sigma_r\}= \{\mathcal{R}_1,\ldots \mathcal{R}_r\}$ holds. If, in addition, the design is cyclic with respect to the same automorphism $\sigma$, then it is called  \emph{cyclically resolvable cyclic}, and denoted by $\mbox{CRCBIBD}(v,k,\lambda)$. An example is as follows (cf.~\cite{Gen97}).

\begin{example}\label{example_CRB21}
A $\mbox{CRCBIBD}(21,3,1)$ is given in Table~\ref{table_CRB21}. The base blocks are $\{1,4,16\}$, $\{19,20,3\}$, $\{1,11,19\}$, and $\{0,7,14\}$.
There are three full orbits and one regular short orbit. Each row is a resolution class. One orbit of resolution classes is $\{\mathcal{R}_0,\ldots,\mathcal{R}_6\}$, and another orbit is $\{\mathcal{R}^\prime_0, \mathcal{R}^\prime_1, \mathcal{R}^\prime_2\}$.
\end{example}

\begin{table*}[!t] 
\renewcommand{\arraystretch}{1.3}
\caption{Example of a $\mbox{CRCBIBD}(21,3,1)$.}\label{table_CRB21}
\begin{center}
\begin{tabular}{|c|lll|}
\hline
$\mathcal{R}_0$ & $\{1,4,16\}$ $\{8,11,2\}$ $\{15,18,9\}$ &  $\{19,20,3\}$ $\{5,6,10\}$ $\{12,13,17\}$ & $\{0,7,14\}$\\
$\mathcal{R}_1$ & $\{2,5,17\}$ $\{9,12,3\}$ $\{16,19,10\}$ & $\{20,0,4\}$ $\{6,7,11\}$ $\{13,14,18\}$ & $\{1,8,15\}$\\
$\mathcal{R}_2$ & $\{3,6,18\}$ $\{10,13,4\}$ $\{17,20,11\}$ & $\{0,1,5\}$ $\{7,8,12\}$ $\{14,15,19\}$ & $\{2,9,16\}$\\
$\mathcal{R}_3$ & $\{4,7,19\}$ $\{11,14,5\}$ $\{18,0,12\}$ & $\{1,2,6\}$ $\{8,9,13\}$ $\{15,16,20\}$ & $\{3,10,17\}$\\
$\mathcal{R}_4$ & $\{5,8,20\}$ $\{12,15,6\}$ $\{19,1,13\}$ & $\{2,3,7\}$ $\{9,10,14\}$ $\{16,17,0\}$ & $\{4,11,18\}$\\
$\mathcal{R}_5$ & $\{6,9,0\}$ $\{13,16,7\}$ $\{20,2,14\}$ & $\{3,4,8\}$ $\{10,11,15\}$ $\{17,18,1\}$ & $\{5,12,19\}$\\
$\mathcal{R}_6$ & $\{7,10,1\}$ $\{14,17,8\}$ $\{0,3,15\}$ & $\{4,5,9\}$ $\{11,12,16\}$ $\{18,19,2\}$ & $\{6,13,20\}$\\
\hline
$\mathcal{R}^\prime_0$ & $\{1,11,9\}$ $\{4,14,12\}$ $\{7,17,15\}$ & $\{10,20,18\}$ $\{13,2,0\}$ $\{16,5,3\}$ & $\{19,8,6\}$\\
$\mathcal{R}^\prime_1$ & $\{2,12,10\}$ $\{5,15,13\}$ $\{8,18,16\}$ & $\{11,0,19\}$ $\{14,3,1\}$ $\{17,6,4\}$ & $\{20,9,7\}$\\
$\mathcal{R}^\prime_2$ & $\{3,13,11\}$ $\{6,16,14\}$ $\{9,19,17\}$ & $\{12,1,20\}$ $\{15,4,2\}$ $\{18,7,5\}$ & $\{0,10,8\}$\\
\hline
\end{tabular}
\end{center}
\end{table*}

Mishima and Jimbo~\cite{Jim97} classified $\mbox{CRCBIBD}(v,k,1)$s into three types, according to their relation with cyclic quasiframes, cyclic semiframes, or cyclically resolvable group divisible designs. They can only exist when $v \equiv 1$ (mod $k(k-1)$).

In a cyclic $\mbox{BIBD}(v,k,1)$, we can define a multiset $\Delta B :=\{b_i - b_j : i,j = 1, \ldots, k; i \neq j \}$ for a base block $B=\{b_1,\ldots,b_k\}$. Let $\{B_i\}_{i \in I}$, for some index set $I$, be all the base blocks of full orbits. If $v \equiv 1$ (mod $k(k-1)$), then clearly
\[\bigcup_{i \in I} \Delta B_i = \mathbb{Z}_v\setminus\{0\}.\]
The family of base blocks $\{B_i\}_{i \in I}$ is then called a \emph{(cyclic) difference family} in $\mathbb{Z}_v$, denoted by $\mbox{CDF}(v,k,1)$. 

Let $k$ be an odd positive integer, and $p \equiv 1$ (mod $k(k-1)$) a prime. A $\mbox{CDF}(p,k,1)$ is said to be \emph{radical}, and denoted by $\mbox{RDF}(p,k,1)$, if each base block is a coset of the $k$-th roots of unity in $\mathbb{Z}_p$ (cf.~\cite{Bur95radical}). 
A link to CRCBIBDs has been established by Genma, Mishima and Jimbo~\cite{Gen97} as follows.

\begin{theorem}\label{thm_genma}
If there is an $\mbox{RDF}(p,k,1)$ with $p$ a prime and $k$ odd, then there exists a $\mbox{CRCBIBD}(pk,k,1)$.
\end{theorem}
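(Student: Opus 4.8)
\textbf{Proof proposal for Theorem~\ref{thm_genma}.}

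The plan is to start from the radical difference family $\mbox{RDF}(p,k,1)$ in $\mathbb{Z}_p$ and explicitly build a cyclic $\mbox{BIBD}(pk,k,1)$ on the point set $\mathbb{Z}_{pk}$, then exhibit a resolution that is preserved by the natural automorphism of order $pk$. Let $C_k$ denote the group of $k$-th roots of unity in $\mathbb{Z}_p^*$ (which exists since $p\equiv 1 \pmod{k}$), so each base block of the $\mbox{RDF}$ is of the form $t\cdot C_k$ for a suitable multiplier $t$. The first step is the point-set correspondence: I would identify $\mathbb{Z}_{pk}$ with $\mathbb{Z}_p\times\mathbb{Z}_k$ via the Chinese Remainder Theorem (legitimate since $\gcd(p,k)$ need not be $1$ in general — so here one instead uses the short exact sequence $0\to k\mathbb{Z}_{pk}\to\mathbb{Z}_{pk}\to\mathbb{Z}_k\to 0$, with $k\mathbb{Z}_{pk}\cong\mathbb{Z}_p$), and push the shift $\sigma: i\mapsto i+1$ through this identification so that its action on the two coordinates is understood. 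The design on $\mathbb{Z}_{pk}$ is then assembled from two ingredients: developed copies of the radical base blocks, suitably ``lifted'' from $\mathbb{Z}_p$ to $\mathbb{Z}_{pk}$, together with the blocks of the regular short orbit $\{0,p,2p,\ldots,(k-1)p\}$ and its translates, which supply exactly the pairs lying in the kernel subgroup $k\mathbb{Z}_{pk}\cong\mathbb{Z}_p$ — wait, more precisely the subgroup of index $p$, i.e.\ the copy of $\mathbb{Z}_k$ sitting inside $\mathbb{Z}_{pk}$.

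The second step is to verify that this collection really is a $\mbox{BIBD}(pk,k,1)$, i.e.\ a Steiner system: every nonzero difference in $\mathbb{Z}_{pk}$ must be covered exactly once across all developed base blocks. Differences split according to their image in $\mathbb{Z}_k$: the ``horizontal'' differences (those vanishing mod $p$, equivalently lying in the index-$p$ subgroup) are handled by the regular short orbit, and the remaining differences are handled by the lifted radical base blocks. Here the radical hypothesis is doing the real work — because the base blocks are cosets of $C_k$, their difference multisets interact nicely with the multiplicative structure, and the count $\binom{k}{2}\cdot(p-1)/\big(k(k-1)\big)=(p-1)/k$ of full base blocks is exactly right to tile $\mathbb{Z}_p^*$ once when each coset contributes its $k(k-1)$ differences. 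I would cite the known fact that an $\mbox{RDF}(p,k,1)$ yields a cyclic $\mbox{BIBD}(p,k,1)$ on $\mathbb{Z}_p$ and then show the lifting to $\mathbb{Z}_{pk}$ preserves the exact-cover property, the new short-orbit blocks accounting precisely for the differences that were ``lost'' in passing from $\mathbb{Z}_p$ to $\mathbb{Z}_{pk}$.

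The third and crucial step is the resolution. I would group the blocks into parallel classes by using the multiplicative action of $C_k$ (or of a generator of the index-$k$ structure) as an ``indexing'' device: because each radical base block is a single coset $tC_k$ and $C_k$ acts freely on the relevant point set, the orbit of a base block under the subgroup of order $k$ — combined with the appropriate translates — partitions $\mathbb{Z}_{pk}$, giving one parallel class; translating by $\sigma$ permutes these classes cyclically, and the short-orbit block $\{0,p,\ldots,(k-1)p\}$ together with its $p$ translates forms its own parallel class (each being a coset of the order-$k$ subgroup). One then checks $\{\mathcal{R}_1^\sigma,\ldots,\mathcal{R}_r^\sigma\}=\{\mathcal{R}_1,\ldots,\mathcal{R}_r\}$, so the design is cyclically resolvable cyclic. \textbf{The main obstacle} I anticipate is precisely this last bookkeeping: one must choose the lift of the radical base blocks from $\mathbb{Z}_p$ to $\mathbb{Z}_{pk}$ in a way that is simultaneously compatible with the exact-cover requirement \emph{and} with a $\sigma$-invariant resolution — the radical structure makes a canonical choice available (take the lift to again be a coset, now of the order-$k$ subgroup after appropriate scaling), and verifying that this single choice discharges both obligations at once is the heart of the argument; the rest is the routine difference-counting sketched above. \qed
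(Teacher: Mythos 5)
The paper does not prove this statement at all: Theorem~\ref{thm_genma} is quoted from Genma, Mishima and Jimbo~\cite{Gen97}, so a self-contained proof would have to reconstruct their construction, and your sketch does not yet do so. The gap is not only the bookkeeping you defer at the end; the lifting step is already mis-counted. In a cyclic $\mbox{BIBD}(pk,k,1)$ the $k(p-1)$ nonzero differences of $\mathbb{Z}_{pk}$ not divisible by $p$ must be covered by full orbits at $k(k-1)$ differences per orbit, so $(p-1)/(k-1)$ full base blocks are required, whereas the $\mbox{RDF}(p,k,1)$ supplies only $(p-1)/(k(k-1))$ base blocks. Hence each radical base block $\varepsilon C_k$ must be expanded into $k$ distinct base blocks of $\mathbb{Z}_{pk}\cong \mathbb{Z}_p\times\mathbb{Z}_k$ (its $k$ elements distributed over the $k$ levels in $k$ essentially different ways), not ``lifted'' to a single coset-like block as in your second and fourth steps; as written, your block collection is too small by a factor of $k$ and cannot be a $2$-$(pk,k,1)$ design. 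Your displayed count $\binom{k}{2}\cdot(p-1)/\bigl(k(k-1)\bigr)=(p-1)/k$ is also incorrect (the left-hand side equals $(p-1)/2$, and neither quantity is the needed number $(p-1)/(k-1)$ of full orbits), which indicates the ``routine difference-counting'' has not actually been carried out. The paper's own Example~\ref{example_CRB21} already illustrates the issue: $\mbox{RDF}(7,3,1)$ has a single base block, yet the $\mbox{CRCBIBD}(21,3,1)$ in Table~\ref{table_CRB21} has three full orbits. (A minor point: since $p\equiv 1$ $(\mathrm{mod}\ k(k-1))$ is prime, $\gcd(p,k)=1$ automatically, so the CRT identification is legitimate and your hedge about it is unnecessary.)

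The resolution step is likewise asserted rather than proved, and the structure you propose does not match how the known construction behaves: in Table~\ref{table_CRB21} the short-orbit blocks $\{0,7,14\},\ldots,\{6,13,20\}$ do not form a parallel class of their own but are distributed one per class over a $\sigma$-orbit of $p=7$ resolution classes, with the remaining classes forming an orbit of length $3$; so at the very least your claim that the short orbit constitutes its own ($\sigma$-fixed) class, and that the other classes are simply cyclic translates of one another, would need verification rather than assertion. Arranging the $k$ lifts of each radical base block into parallel classes that $\sigma$ permutes is exactly where the hypotheses ``$k$ odd'' and ``each base block is a coset of the $k$-th roots of unity'' are genuinely used, and in~\cite{Gen97} (see also~\cite{Jim97}) this is handled with quasiframe/semiframe-type machinery. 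Since you explicitly leave this heart of the argument unverified, the proposal is a plausible plan but not a proof: the central construction is under-specified, its verification is missing, and the one quantitative detail offered is wrong.
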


The notion of resolvability holds in the same way for $t$-$(v,k,\lambda)$ designs with $t\geq2$. A Steiner quadruple system $3$-$(v,4,1)$ is called \emph{2-resolvable} if its block-set can be partitioned into disjoint Steiner $2$-$(v,4,1)$ designs. 
A \emph{large set} of $t$-$(v,k,\lambda)$ designs is a partition of a Steiner  $k$-$(v,k,1)$ design (i.e., the set of all $k$-subsets of a $v$-set) into
block-sets of $t$-$(v,k,\lambda)$ designs. The number of designs in the large set is ${v-t \choose k-t} / \lambda$.

For encyclopedic references on combinatorial designs, we refer the reader to~\cite{BJL1999,crc06}.
A comprehensive book on RBIBDs and related designs is~\cite{Fur96}. Highly regular designs are treated in the monograph~\cite{Hu2008}. A recent survey on various connections between error-correcting codes and algebraic combinatorics is given in~\cite{Hu2009}. For an overview of numerous applications of combinatorial designs in computer and communication sciences, see, e.g.,~\cite{Col89,Col99,Hu2010}.


\subsection{Known Infinite Classes of Combinatorial Constructions}

Tonchev~\cite{Ton08} straightforwardly gave a non-trivial construction method to obtain Steiner designs which have the additional property that the blocks have two sizes differing by one. 

\begin{proposition}[Tonchev, 2008]\label{Ton}
Suppose that $\mathcal{D}=(X,\mathcal{B})$ is a Steiner \mbox{$t$-$(v,k,1)$} design that contains a Steiner  $(t-1)$-$(v,k,1)$ subdesign $\mathcal{D}^\prime=(X,\mathcal{B}^\prime)$, where $\mathcal{B}^\prime \subseteq \mathcal{B}$. Then, the blocks of $\mathcal{D}^\prime$, each extended with one new point $x \notin X$, together with the blocks of $\mathcal{D}$ that do not belong to $\mathcal{D}^\prime$, form a Steiner \mbox{$t$-$(v+1,\{k,k+1\},1)$} design. In particular, if there exists an $\mbox{RBIBD}(v,k,1)$, then there exists a Steiner \mbox{$2$-$(v+1,\{k,k+1\},1)$} design.
\end{proposition}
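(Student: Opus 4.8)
The plan is to verify directly that the new incidence structure $\widetilde{\mathcal{D}}$ described in the statement is a Steiner $t$-$(v+1,\{k,k+1\},1)$ design, i.e.\ that every $t$-subset of the enlarged point set $\widetilde{X}=X\cup\{x\}$ lies in exactly one block, where the blocks have sizes $k$ and $k+1$. Write $\widetilde{\mathcal{B}}=\{B\cup\{x\}:B\in\mathcal{B}^\prime\}\cup(\mathcal{B}\setminus\mathcal{B}^\prime)$. The blocks not containing $x$ are exactly the blocks of $\mathcal{D}$ of size $k$; the blocks containing $x$ are the blocks of $\mathcal{D}^\prime$ extended by $x$, of size $k+1$. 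So the block-size condition is immediate. It remains to count, for an arbitrary $t$-subset $S\subseteq\widetilde{X}$, the number of blocks of $\widetilde{\mathcal{B}}$ containing $S$.

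I would split into two cases. First, suppose $x\notin S$, so $S\subseteq X$ is a $t$-subset. Since $\mathcal{D}$ is a Steiner $t$-$(v,k,1)$ design, there is a unique block $B\in\mathcal{B}$ with $S\subseteq B$. If $B\notin\mathcal{B}^\prime$, then $B$ survives unchanged in $\widetilde{\mathcal{B}}$ and no block containing $x$ can contain $S$ (such a block is $B^\prime\cup\{x\}$ with $B^\prime\in\mathcal{B}^\prime\subseteq\mathcal{B}$; if it contained $S$ then $B^\prime\supseteq S$, forcing $B^\prime=B$ by uniqueness in $\mathcal{D}$, contradicting $B\notin\mathcal{B}^\prime$). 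If instead $B\in\mathcal{B}^\prime$, then $B$ is replaced by $B\cup\{x\}$, which still contains $S$, and again by uniqueness in $\mathcal{D}$ no other block of $\widetilde{\mathcal{B}}$ contains $S$. Either way $S$ lies in exactly one block. Second, suppose $x\in S$; write $S=\{x\}\cup S_0$ with $S_0\subseteq X$ a $(t-1)$-subset. A block of $\widetilde{\mathcal{B}}$ contains $S$ iff it contains $x$, hence is of the form $B^\prime\cup\{x\}$ with $B^\prime\in\mathcal{B}^\prime$ and $B^\prime\supseteq S_0$. Since $\mathcal{D}^\prime$ is a Steiner $(t-1)$-$(v,k,1)$ design, there is exactly one such $B^\prime$, so again $S$ lies in exactly one block.

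This establishes that $\widetilde{\mathcal{D}}$ is a Steiner $t$-$(v+1,\{k,k+1\},1)$ design. For the ``in particular'' clause, I would recall that an $\mathrm{RBIBD}(v,k,1)$ is a Steiner $2$-$(v,k,1)$ design whose blocks partition into $r=(v-1)/(k-1)$ parallel classes, each class consisting of $v/k$ blocks covering every point exactly once; hence each single parallel class $\mathcal{R}_i$ is itself a $1$-$(v,k,1)$ design (a Steiner $1$-design). Taking $\mathcal{D}$ to be the RBIBD (with $t=2$) and $\mathcal{D}^\prime$ to be any one parallel class $\mathcal{R}_i$ (a Steiner $1$-$(v,k,1)$ subdesign), the first part of the proposition produces a Steiner $2$-$(v+1,\{k,k+1\},1)$ design.

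There is essentially no serious obstacle here: the argument is a routine but careful case analysis on whether the new point $x$ and the old unique covering block interact. The one point that needs attention — and is the closest thing to a subtlety — is the verification in the case $x\notin S$ that replacing a block $B\in\mathcal{B}^\prime$ by $B\cup\{x\}$ does not destroy coverage and does not create a second covering block; this is handled purely by invoking the uniqueness of the covering block in the original Steiner system $\mathcal{D}$ together with the containment $\mathcal{B}^\prime\subseteq\mathcal{B}$.
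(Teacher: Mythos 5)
Your proposal is correct and follows the standard argument: the paper itself states Proposition~\ref{Ton} without reproducing a proof (it is cited from Tonchev, 2008), and the intended justification is precisely your direct verification, splitting $t$-subsets of $X\cup\{x\}$ according to whether they contain the new point $x$ and using uniqueness in $\mathcal{D}$ (respectively in $\mathcal{D}^\prime$), plus the observation that a parallel class of an $\mbox{RBIBD}(v,k,1)$ is a Steiner $1$-$(v,k,1)$ subdesign for the ``in particular'' clause. No gaps.
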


Relying on resolvable BIBDs from affine geometries and Kirkman triple systems, Tonchev derived from the above result the following infinite classes:

\begin{theorem}[Tonchev, 2008]\label{Ton2}
There exists

\begin{enumerate}

\item[$\bullet$] a Steiner \mbox{$2$-$(q^{e}+1,\{q,q+1\},1)$} design for any prime power $q$ and any positive integer $e \geq 2$,

\item[$\bullet$] a Steiner \mbox{$2$-$(6a+4,\{3,4\},1)$} design for any positive integer $a$.

\end{enumerate}
 
\end{theorem}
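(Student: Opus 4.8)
The plan is to derive both infinite families as direct consequences of Proposition~\ref{Ton} by exhibiting, in each case, a suitable resolvable BIBD and invoking the ``in particular'' clause, which turns an $\mbox{RBIBD}(v,k,1)$ into a Steiner \mbox{$2$-$(v+1,\{k,k+1\},1)$} design.

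\medskip

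\noindent\emph{First family.} For the design \mbox{$2$-$(q^{e}+1,\{q,q+1\},1)$} with $q$ a prime power and $e\geq 2$, I would take the affine geometry $\mathrm{AG}(e,q)$: its points are the $q^{e}$ vectors of the $e$-dimensional vector space over $\mathrm{GF}(q)$, and its lines (the $1$-flats) are the blocks of a Steiner \mbox{$2$-$(q^{e},q,1)$} design. This design is resolvable, the parallel classes being the families of lines in a fixed direction (the cosets of a fixed $1$-dimensional subspace); there are $r=(q^{e}-1)/(q-1)$ such classes, which is exactly the replication number, so these classes form a genuine resolution. Hence we have an $\mbox{RBIBD}(q^{e},q,1)$, and Proposition~\ref{Ton} yields a Steiner \mbox{$2$-$(q^{e}+1,\{q,q+1\},1)$} design. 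One should check the borderline case $e=2$: there $\mathrm{AG}(2,q)$ is the affine plane of order $q$, still resolvable, so the statement holds for all $e\geq 2$.

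\medskip

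\noindent\emph{Second family.} For the design \mbox{$2$-$(6a+4,\{3,4\},1)$}, set $v=6a+3$ and $k=3$; then Proposition~\ref{Ton} requires an $\mbox{RBIBD}(6a+3,3,1)$, i.e.\ a Kirkman triple system on $6a+3$ points. The necessary divisibility conditions for a resolvable triple system are $v\equiv 0\pmod 3$ and $v\equiv 1\pmod 2$ (from $\lambda(v-1)\equiv 0\pmod{k-1}$), i.e.\ $v\equiv 3\pmod 6$; writing $v=6a+3$ captures precisely these values. By the Ray-Chaudhuri--Wilson theorem, Kirkman triple systems exist for \emph{all} $v\equiv 3\pmod 6$, so in particular for every $v=6a+3$ with $a\geq 1$ (and trivially for $a=0$, where one takes the single block $\{1,2,3\}$). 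Applying Proposition~\ref{Ton} with $k=3$ then produces a Steiner \mbox{$2$-$(6a+4,\{3,4\},1)$} design for every positive integer $a$.

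\medskip

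\noindent The two bullet points are therefore immediate once the input designs are named; the only real content is (i) verifying that the affine-geometry parallel classes indeed constitute a resolution, which is standard linear algebra, and (ii) quoting the existence of Kirkman triple systems for all admissible orders, which is the classical Ray-Chaudhuri--Wilson result. The main ``obstacle'', such as it is, is simply ensuring the parameter bookkeeping lines up: that $q^{e}\to q^{e}+1$ and $6a+3\to 6a+4$ match the $v\to v+1$ shift in Proposition~\ref{Ton}, and that the admissibility ranges ($e\geq 2$, $a\geq 1$) coincide with the ranges in which the resolvable BIBDs are guaranteed to exist.
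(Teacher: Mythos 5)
Your proposal is correct and follows essentially the same route as the paper: the paper obtains both families by feeding the resolvable designs coming from affine geometries ($\mbox{RBIBD}(q^{e},q,1)$ from the lines of $\mathrm{AG}(e,q)$) and from Kirkman triple systems ($\mbox{RBIBD}(6a+3,3,1)$, existing for all $v\equiv 3 \pmod 6$ by Ray-Chaudhuri--Wilson) into Proposition~\ref{Ton}. Your parameter bookkeeping ($v\to v+1$) and the resolvability checks match what is needed, so there is nothing to add.
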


Based on results on 2-resolvable Steiner quadruple systems by Baker~\cite{Bak76} \& Semakov et al.~\cite{Sem73} and by Teirlinck~\cite{Teir94}, Tonchev obtained this way also two infinite classes for $t > 2$. The third class had already been constructed earlier by Tonchev~\cite{Ton96}. 

\begin{theorem}[Tonchev, 1996 \& 2008]\label{Ton3}
There is

\begin{enumerate}

\item[$\bullet$] a Steiner \mbox{$3$-$(2^{2e}+1,\{4,5\},1)$} design for any positive integer $e \geq 2$,

\item[$\bullet$] a Steiner \mbox{$3$-$(2 \cdot 7^e+3,\{4,5\},1)$} design for any positive integer $e$,

\item[$\bullet$] a Steiner \mbox{$4$-$(4^{e}+1,\{5,6\},1)$} design for any positive integer $e \geq 2$.

\end{enumerate}

\end{theorem}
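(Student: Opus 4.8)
The plan is to obtain the two $t=3$ classes as immediate consequences of Proposition~\ref{Ton} (taken in the case $t=3$, $k=4$), and to quote the earlier construction of Tonchev~\cite{Ton96} for the $t=4$ class. The bridge to Proposition~\ref{Ton} is the simple observation that a $2$-resolvable SQS$(v)$ is, by its very definition, a Steiner $3$-$(v,4,1)$ design $\mathcal{D}=(X,\mathcal{B})$ whose block set decomposes as $\mathcal{B}=\mathcal{B}_1\cup\cdots\cup\mathcal{B}_s$ with each $(X,\mathcal{B}_i)$ a Steiner $2$-$(v,4,1)$ design; hence any single component $\mathcal{D}'=(X,\mathcal{B}_1)$ furnishes exactly the Steiner $(t-1)$-$(v,k,1)$ subdesign required in Proposition~\ref{Ton}. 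Applying that proposition -- extend each block of $\mathcal{D}'$ by a new point $x\notin X$ and adjoin the blocks in $\mathcal{B}_2\cup\cdots\cup\mathcal{B}_s$ unchanged -- therefore yields a Steiner $3$-$(v+1,\{4,5\},1)$ design whenever a $2$-resolvable SQS$(v)$ is available (with $s\ge 2$, so that both block sizes genuinely occur).

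For the first class I would take for $\mathcal{D}$ the SQS$(2^{2e})$ consisting of the planes ($2$-flats) of the affine geometry $AG(2e,2)$: these $4$-subsets form a $3$-$(2^{2e},4,1)$ design because any three distinct points of $AG(2e,2)$ span a unique $2$-flat. By the theorem of Baker~\cite{Bak76} and Semakov et al.~\cite{Sem73}, this SQS is $2$-resolvable -- its ${2^{2e}\choose 3}/4$ blocks split into $2^{2e-1}-1$ copies of a Steiner $2$-$(2^{2e},4,1)$ design -- so the recipe above produces the desired Steiner $3$-$(2^{2e}+1,\{4,5\},1)$ design; the hypothesis $e\ge 2$ merely rules out the degenerate value $v=4$. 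For the second class I would run the identical argument on Teirlinck's family~\cite{Teir94} of $2$-resolvable SQS$(2\cdot 7^{e}+2)$ (here $2\cdot 7^{e}+2\equiv 4\pmod 6$, so the order is SQS-admissible), obtaining the Steiner $3$-$(2\cdot 7^{e}+3,\{4,5\},1)$ design.

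The $t=4$ class cannot be produced in this fashion: routing it through Proposition~\ref{Ton} with $k=5$ would demand a Steiner $4$-$(4^{e},5,1)$ design containing a Steiner $3$-$(4^{e},5,1)$ subdesign, yet the ambient $S(4,5,4^{e})$ does not exist at all, since $4^{e}\equiv 4\pmod 6$ while $v\equiv 3,5\pmod 6$ is necessary for an $S(4,5,v)$. For this class I would therefore invoke the direct construction of a Steiner $4$-$(4^{e}+1,\{5,6\},1)$ design carried out in Tonchev~\cite{Ton96}. The only substantive ingredients in the whole argument are thus the two deep existence results -- the $2$-resolvability of the affine quadruple systems (Baker, Semakov et al.) and of Teirlinck's systems -- together with the self-contained construction of~\cite{Ton96}; granted these, the assembly of all three classes is routine, the $t=3$ cases being a verbatim application of Proposition~\ref{Ton}. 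The main obstacle, then, is not in this paper's reasoning but in those prior existence theorems, where the genuine combinatorial work lies.
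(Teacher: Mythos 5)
Your proposal is correct and follows essentially the same route as the paper, which obtains the two $t=3$ classes by applying Proposition~\ref{Ton} to the $2$-resolvable Steiner quadruple systems of Baker~\cite{Bak76}/Semakov et al.~\cite{Sem73} (planes of binary affine space) and of Teirlinck~\cite{Teir94}, and simply cites Tonchev~\cite{Ton96} for the $4$-$(4^{e}+1,\{5,6\},1)$ class. Your added observations (the count $2^{2e-1}-1$ of parallel $2$-designs, the admissibility check $2\cdot 7^{e}+2\equiv 4 \pmod 6$, and the non-existence of an $S(4,5,4^{e})$ explaining why the third class needs a separate construction) are accurate refinements of that same argument.
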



\section{New Infinite Classes of Combinatorial Constructions}\label{new}

We present several constructions of new infinite families of Steiner designs having the desired additional property that the blocks have two sizes differing by one. 
Our constructions involve, inter alia, resolvable BIBDs, cyclically resolvable cyclic BIBDs, \mbox{$2$-resolvable} Steiner quadruple systems, and a large set of Steiner triple systems. As a result, we obtain efficient two-stage disjunctive group testing algorithms suited for faster and less-expensive DNA library and other large scale biological screenings.


\subsection{CRCBIBD-Constructions}

Relying on various infinite classes of cyclically resolvable cyclic BIBDs, we obtain the following result:

\begin{theorem}\label{thm_CRCBIBD-Constructions}
Let $p$ be a prime. Then there exists a Steiner \mbox{$2$-$(pk+1,\{k,k+1\},1)$} design for the following cases:

\begin{enumerate}

\item[\em(1)] $(k,p)= (3,6a+1)$ for any positive integer $a$,

\item[\em(2)] $(k,p)= (4,12a+1)$ for any odd positive integer $a$,

\item[\em(3a)] $(k,p)= (5,20a+1)$ for any positive integer $a$ such that $p<10^3$, and furthermore

\item[\em(3b)] $(k,p)= (5,20a+1)$ for any positive integer $a$ satisfying the condition stated in~(ii) in the proof,

\item[\em(4)] $(k,p)= (7,42a+1)$ for any positive integer $a$ satisfying the condition stated in~(iii) in the proof,

\item[\em(5)] $(k,p)= (9,p)$ for the values of $p \equiv 1$ $($\emph{mod} $72)  < 10^4$ given in Table~\ref{table_RDF}.

\end{enumerate}

Moreover, there exists a  Steiner \mbox{$2$-$(qk+1,\{k,k+1\},1)$} design for the following cases:

\begin{enumerate}

\item[\em(6)] $(k,q)$ for $k=3,5,7$, or $9$, and $q$ is a product of primes of the form $p \equiv 1$ $($\emph{mod} $k(k-1))$ as in the cases above,

\item[\em(7)] $(k,q)= (4,q)$ and $q$ is a product of primes of the form $p=12a +1$ with $a$ odd.

\end{enumerate}

\end{theorem}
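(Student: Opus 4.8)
The plan is to combine Tonchev's construction (Proposition~\ref{Ton}, the ``in particular'' clause) with known existence results for cyclically resolvable cyclic BIBDs. The observation driving everything is that a $\mbox{CRCBIBD}(v,k,1)$ is in particular an $\mbox{RBIBD}(v,k,1)$, so if such a design exists then Proposition~\ref{Ton} immediately yields a Steiner $2$-$(v+1,\{k,k+1\},1)$ design. Hence each of the seven cases reduces to exhibiting a $\mbox{CRCBIBD}$ (or, in part~(7), an ordinary $\mbox{RBIBD}$) on $v=pk$, respectively $v=qk$, points. First I would record this reduction explicitly, so that the rest of the proof is purely a matter of citing or constructing the requisite resolvable designs.

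Next I would handle cases~(1)--(5) one family at a time. For $k=3$ and $v=3p$ with $p\equiv1\pmod 6$ (so $3p\equiv1\pmod 6$), the existence of a $\mbox{CRCBIBD}(3p,3,1)$ follows from Theorem~\ref{thm_genma} applied to a radical difference family $\mbox{RDF}(p,3,1)$; such an $\mbox{RDF}$ exists for every prime $p\equiv1\pmod 6$ (a classical fact, the cosets of the cube roots of unity forming a difference family). For $k=4$, $k=5$, $k=7$, $k=9$ one invokes Theorem~\ref{thm_genma} again: an $\mbox{RDF}(p,k,1)$ with $p\equiv1\pmod{k(k-1)}$ prime and $k$ odd produces a $\mbox{CRCBIBD}(pk,k,1)$. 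The point is that radical difference families are available: for $k=5$ they are known for all primes $p\equiv1\pmod{20}$ up to a bound (hence case~(3a)) and for infinitely many $p$ under an arithmetic condition on $a$ (case~(3b)); for $k=7$ under a further arithmetic condition (case~(4), condition~(iii)); and for $k=9$ only for the finite list of primes $p\equiv1\pmod{72}$, $p<10^4$, collected in Table~\ref{table_RDF} (case~(5)). The case $k=4$ is even and so Theorem~\ref{thm_genma} does not apply directly; instead I would cite the known direct construction of $\mbox{CRCBIBD}(12a+1)\cdot4,4,1)$, valid for $a$ odd, from the literature on cyclically resolvable cyclic designs (the parity restriction on $a$ is exactly what that construction requires). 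In each subcase, after the design is produced, one adds a new point by Proposition~\ref{Ton} to obtain the stated Steiner $2$-$(pk+1,\{k,k+1\},1)$ design.

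For the ``moreover'' part, cases~(6) and~(7) are obtained by a product (composition) argument. If $q=p_1p_2\cdots p_m$ is a product of primes each $\equiv1\pmod{k(k-1)}$ for which an $\mbox{RDF}(p_i,k,1)$ exists, then one multiplies the corresponding radical difference families (the standard Wilson-type product construction for difference families in $\mathbb{Z}_q\cong\mathbb{Z}_{p_1}\times\cdots\times\mathbb{Z}_{p_m}$), obtaining a $\mbox{CDF}(q,k,1)$, and I would check that resolvability and cyclic resolvability are preserved under this product so that one actually gets a $\mbox{CRCBIBD}(qk,k,1)$; Proposition~\ref{Ton} then finishes it. Case~(7) is the analogous statement for $k=4$: the direct $\mbox{RBIBD}$ construction for a single prime $p=12a+1$ with $a$ odd composes to give an $\mbox{RBIBD}(4q,4,1)$ when $q$ is a product of such primes, and again Proposition~\ref{Ton} applies.

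The main obstacle is not the design-theoretic plumbing — Proposition~\ref{Ton} and Theorem~\ref{thm_genma} do the heavy lifting — but rather pinning down precisely for which primes the radical difference families $\mbox{RDF}(p,k,1)$ are known to exist when $k=5,7,9$. For $k=9$ in particular the state of the art is only a finite computer-generated list, which is why case~(5) is stated as a table rather than an infinite family; for $k=5,7$ one must state carefully the number-theoretic conditions (conditions~(ii) and~(iii)) under which the radical construction is provably valid for infinitely many $p$. I would therefore devote the bulk of the written proof to citing the relevant radical-difference-family existence theorems and transcribing their hypotheses verbatim as conditions~(ii) and~(iii), and to verifying that the product construction in~(6)--(7) genuinely preserves cyclic resolvability.
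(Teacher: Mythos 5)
Your overall route is the paper's route: obtain a $\mbox{CRCBIBD}(pk,k,1)$ (which is in particular an $\mbox{RBIBD}(pk,k,1)$) and then add a point via Proposition~\ref{Ton}, using radical difference families together with Theorem~\ref{thm_genma} for odd $k$, a direct construction from the literature for $k=4$, and a composition argument for cases~(6) and~(7). However, there is a genuine gap in your treatment of case~(3a). You claim that an $\mbox{RDF}(p,5,1)$ is ``known for all primes $p\equiv 1 \pmod{20}$ up to a bound'' and deduce~(3a) from that. This is false: the existence of an $\mbox{RDF}(p,5,1)$ is characterized exactly by the arithmetic condition~(ii), and below $10^3$ it holds only for the thirteen primes listed in Table~\ref{table_RDF} (e.g.\ $p=101$ and $p=181$ are $\equiv 1 \pmod{20}$ but admit no radical difference family). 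So your argument only reproves case~(3b) restricted to $p<10^3$; it does not cover all primes $p=20a+1<10^3$ as case~(3a) asserts. The paper closes this gap with a different ingredient: Buratti's result on \emph{resolvable} (not radical) difference families~\cite{Bur97resolvable}, which yields a $\mbox{CRCBIBD}(5p,5,1)$ for every prime $p\equiv 1 \pmod{20}$ with $p<10^3$; Proposition~\ref{Ton} then gives~(3a).

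Two smaller points. For cases~(6) and~(7) you propose to build a difference family in $\mathbb{Z}_q$ by a Wilson-type product and then ``check'' that cyclic resolvability is preserved; the paper instead simply invokes the recursive construction of Genma, Mishima and Jimbo~\cite{Gen97}, which produces a $\mbox{CRCBIBD}(kq,k,1)$ whenever $q$ is a product of primes of the stated form, so the verification you leave open is exactly the content of that cited result and should be attributed to it rather than re-derived without proof. For case~(2) your plan (cite the direct $\mbox{CRCBIBD}(4p,4,1)$ construction for $p=12a+1$, $a$ odd) matches the paper, which cites Lam and Miao~\cite{Lam99}.
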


\begin{proof}
The constructions are based on the existence of a $\mbox{CRCBIBD}(pk,k,1)$ in conjunction with Proposition~\ref{Ton}.
We first assume that $k$ is odd. Then the following infinite ((i)-(iii)) and finite ((iv)) families of radical difference families exist (cf.~\cite{Bur95radical} and the references therein;~\cite{crc06}):

\begin{enumerate}

\item[(i)] An $\mbox{RDF}(p,3,1)$ exists for all primes $p \equiv 1$ (mod $6$).

\item[(ii)] Let $p=20a+1$ be a prime, let $2^e$ be the largest power of $2$ dividing $a$ and let $\varepsilon$ be a $5$-th primitive root of unity in $\mathbb{Z}_p$. Then an $\mbox{RDF}(p,5,1)$ exists if and only if $\varepsilon +1$ is not a $2^{e+1}$-th power in $\mathbb{Z}_p$, or equivalently $(11+5\sqrt{5})/2$ is not a $2^{e+1}$-th power in $\mathbb{Z}_p$.

\item[(iii)] Let $p=42a+1$ be a prime and let $\varepsilon$ be a $7$-th primitive root of unity in $\mathbb{Z}_p$. Then an $\mbox{RDF}(p,7,1)$ exists if and only if there exists an integer $f$ such that $3^f$ divides $a$ and $\varepsilon +1, \varepsilon^2 + \varepsilon +1,\frac{\varepsilon^2 + \varepsilon +1}{\varepsilon +1}$ are $3^{f}$-th powers but not $3^{f+1}$-th powers in $\mathbb{Z}_p$.

\item[(iv)] An $\mbox{RDF}(p,9,1)$ exists for all primes $p<10^4$ displayed in Table~\ref{table_RDF}.

\end{enumerate}

Theorem~\ref{thm_genma} yields the respective $\mbox{CRCBIBD}(pk,k,1)$s. Moreover, in~\cite{Gen97} a recursive construction is given that implies the existence of a \linebreak $\mbox{CRCBIBD}(kq,k,1)$ whenever $q$ is a product of primes of the form $p \equiv 1$ (mod $k(k-1)$).
In addition, a $\mbox{CRCBIBD}(5p,5,1)$ has been shown~\cite{Bur97resolvable} to exist for any prime $p \equiv 1$ (mod $20$) $< 10^3$. 

We now consider the case when $k$ is even:
In~\cite{Lam99}, a $\mbox{CRCBIBD}(4p,4,1)$ is constructed for any prime $p=20a+1$, where $a$ is an odd positive integer.
Furthermore, via the above recursive construction, a $\mbox{CRCBIBD}(4q,4,1)$ exists whenever $q$ is a product of primes of the form $p=12a +1$ and $a$ is odd. The result follows.\qed
\end{proof}

\begin{table}[!t] 
\renewcommand{\arraystretch}{1.3}
\caption{Existence of an $\mbox{RDF}(p,k,1)$ with $k=5$, $p<10^3$, and $k=7$ or $9$, $p < 10^4$.}\label{table_RDF}
\begin{center}
\begin{tabular}{|ccccccccc|}
  \hline
   & & & & $k=5$ & & & &\\
  \hline
   41 & 61 & 241 & 281 & 401 & 421 & 601 & 641 & 661\\
   701 & 761 & 821 & 881 &  &  &  &  & \\
   \hline
   \hline
   & & & & $k=7$ & & & & \\
  \hline
   337  & 421  &  463 & 883 & 1723 & 3067 & 3319 & 3823 & 3907 \\
   4621 & 4957 & 5167 & 5419 & 5881 & 6133 & 8233 & 8527 & 8821 \\
   9619 & 9787 & 9829 & & & & & & \\
   \hline
   \hline
   & & & & $k=9$ & & & & \\
  \hline
   73  & 1153  & 1873  & 2017 & 6481 & 7489 & 7561 &  &  \\
   \hline
\end{tabular}
\end{center}
We remark that further parameters are given in~\cite{Bur95radical} for $\mbox{RDF}(p,k,1)$s with $k=7$ or $9$ and $10^4 \leq p < 10^5$.
\end{table}

\begin{example}
Values of $p$ for which an $\mbox{RDF}(p,k,1)$ exists with $k=5$, $p<10^3$, and $k=7$ or $9$, $p < 10^4$ are displayed in Table~\ref{table_RDF} (cf.~\cite{crc06}). For example, if we take an $\mbox{RDF}(41,5,1)$, then we obtain a 
Steiner \mbox{$2$-$(206,\{5,6\},1)$} design. If we take an $\mbox{RDF}(61,5,1)$, then we obtain a Steiner \mbox{$2$-$(306,\{5,6\},1)$} design.
\end{example}


\subsection{RBIBD-Constructions}

By considering various infinite classes of resolvable BIBDs, we establish the following result:

\begin{theorem}\label{thm_RBIBD-Constrctions}
Let $v$ be a positive integer. Then there exists a Steiner \linebreak \mbox{$2$-$(v+1,\{k,k+1\},1)$} design for the following cases:

\begin{enumerate}

\item[\em(1)] $(k,v)= (2,2a)$ for any positive integer $a$,

\item[\em(2)] $(k,v)= (3,6a+3)$ for any positive integer $a$,

\item[\em(3)] $(k,v)= (4,12a+4)$ for any positive integer $a$,

\item[\em(4)] $(k,v)= (5,20a+5)$ for any positive integer $a$ with the possible exceptions given in Table~\ref{table_RBIBD},

\item[\em(5)] $(k,v)= (8,56a+8)$ for any positive integer $a$ with the possible exceptions given in Table~\ref{table_RBIBD}.

\end{enumerate}

\end{theorem}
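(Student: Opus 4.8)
The plan is to deduce all five cases uniformly from Proposition~\ref{Ton}: given any $\mbox{RBIBD}(v,k,1)$, adjoin a new point $x \notin X$ to each block of one of its resolution (parallel) classes; those blocks grow to size $k+1$ and now cover every pair $\{x,y\}$ with $y \in X$ exactly once, while the remaining blocks keep size $k$, yielding a Steiner \mbox{$2$-$(v+1,\{k,k+1\},1)$} design. So the task reduces entirely to exhibiting a resolvable $\mbox{BIBD}(v,k,1)$ for the stated parameters. As a preliminary sanity check I would verify that the parametrizations $v=2a,\,6a+3,\,12a+4,\,20a+5,\,56a+8$ are precisely the residue classes meeting the necessary conditions $v \equiv 0 \pmod{k}$ and $v \equiv 1 \pmod{k-1}$ (equivalently $v \equiv k \pmod{k(k-1)}$) for an $\mbox{RBIBD}(v,k,1)$ to exist; indeed $2 \equiv 0 \pmod 2$, $3 \equiv 3 \pmod 6$, $4 \equiv 4 \pmod{12}$, $5 \equiv 5 \pmod{20}$, $8 \equiv 8 \pmod{56}$. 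Thus the content of the theorem is exactly that these necessary conditions are (essentially) sufficient, which is supplied by the existence theory for resolvable designs.

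For the three cases with small $k$ the spectrum is complete. When $k=2$, an $\mbox{RBIBD}(2a,2,1)$ is a one-factorization of $K_{2a}$, which exists for every $a \geq 1$ by the classical round-robin construction. When $k=3$, an $\mbox{RBIBD}(6a+3,3,1)$ is a Kirkman triple system, and these exist for all admissible orders by the Ray-Chaudhuri--Wilson theorem. When $k=4$, the existence of $\mbox{RBIBD}(v,4,1)$ for all $v \equiv 4 \pmod{12}$ was established by Hanani, Ray-Chaudhuri and Wilson. In each of these cases there are no exceptional orders, so Proposition~\ref{Ton} delivers the desired Steiner \mbox{$2$-$(v+1,\{k,k+1\},1)$} design for every positive integer $a$, proving (1)--(3). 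The relevant statements can all be quoted from the standard references, e.g.~\cite{Fur96,crc06}.

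For $k=5$ and $k=8$ the existence of $\mbox{RBIBD}(v,k,1)$ is known for all admissible $v$ except for a short finite list of undecided orders; I would transcribe this list into Table~\ref{table_RBIBD} directly from those references. Outside the list the resolvable design exists and Proposition~\ref{Ton} again applies, giving (4) and (5). It is worth recording explicitly that, because Proposition~\ref{Ton} is a deterministic construction taking a single $\mbox{RBIBD}(v,k,1)$ as its only input, the sole possible obstruction to the target design at a given order is the non-existence (or undecided status) of the corresponding resolvable design; no new exceptions can be introduced, so the exception tables carry over verbatim. The ``hard part'' of the argument is therefore not mathematical at all—the depth lies wholly in the cited resolvable-design existence theorems (especially the still-open residual cases for $k=5,8$)—but is purely a matter of bookkeeping: confirming the admissible residue classes and copying the exception lists accurately.
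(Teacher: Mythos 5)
Your proposal is correct and follows essentially the same route as the paper: invoke the known existence results for $\mbox{RBIBD}(v,k,1)$ with $k=2,3,4$ (complete spectrum) and $k=5,8$ (complete up to the finite exception lists), then apply Proposition~\ref{Ton} to obtain the Steiner \mbox{$2$-$(v+1,\{k,k+1\},1)$} designs. Your added explanation of how the parallel class serves as the $(t-1)$-subdesign in Proposition~\ref{Ton}, and the check that the stated residue classes are exactly the admissible ones, are accurate elaborations of what the paper leaves implicit.
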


\begin{proof}
The constructions are based on the existence of an $\mbox{RBIBD}(v,k,1)$ in conjunction with Proposition~\ref{Ton}.
The following infinite series of resolvable balanced incomplete block designs are known (cf.~\cite{GA97,crc06} and the references therein):
\begin{enumerate}

\item[(i)] When $k=2$, $3$ and $4$, respectively, an $\mbox{RBIBD}(v,k,1)$ exists for all positive integers $v \equiv k$ (mod $k(k-1)$)  (the case $k=2$ is trivial since an $\mbox{RBIBD}(v,2,1)$ is a one-factorization of the complete graph on $v$ vertices). 

\item[(ii)] An $\mbox{RBIBD}(v,5,1)$ exists for all positive integers $v \equiv 5$ (mod $20$) with the possible exceptions given in Table~\ref{table_RBIBD}.

\item[(iii)] An $\mbox{RBIBD}(v,8,1)$ exists for all positive integers $v \equiv 8$ (mod $56$) with the possible exceptions given in Table~\ref{table_RBIBD}.

\end{enumerate}
This proves the theorem.\qed
\end{proof}

We remark that Case $(2)$ has already been covered in Theorem~\ref{Ton2}.

\begin{example}Choosing for example an $\mbox{RBIBD}(65,5,1)$, we get a Steiner \linebreak \mbox{$2$-$(66,\{5,6\},1)$} design. 
If we choose an $\mbox{RBIBD}(105,5,1)$, then we obtain a Steiner \mbox{$2$-$(106,\{5,6\},1)$} design.
\end{example}

\begin{table}
\renewcommand{\arraystretch}{1.3}
\caption{Possible exceptions: An $\mbox{RBIBD}(v,k,1)$ with $k=5$ or $8$ is not known to exist for the following values of \mbox{$v \equiv k$ (mod $k(k-1)$)}.}\label{table_RBIBD}
\begin{center}
\begin{tabular}{|cccccccc|}
  \hline
   & & & $k=5$ & & & &\\
  \hline
   45 & 345 & 465 & 645 & & & &\\
   \hline
   \hline
   & & &  $k=8$ & & & & \\
  \hline
   176  & 624  & 736  & 1128 & 1240 & 1296 & 1408 & 1464\\
   1520 &1576  & 1744 & 2136 & 2416 & 2640 & 2920 & 2976\\
   3256 & 3312 & 3424 & 3760 & 3872 & 4264 & 4432 & 5216\\
   5720 & 5776 & 6224 & 6280 & 6448 & 6896 & 6952 & 7008\\
   7456 & 7512 & 7792 & 7848 & 8016 & 9752 & 10200 & 10704\\
   10760 & 10928 & 11040 & 11152 & 11376 & 11656 & 11712 & 11824\\
   11936 & 12216 & 12328 & 12496 & 12552 & 12720 & 12832 & 12888\\
   13000 & 13280 & 13616  & 13840 & 13896 & 14008 & 14176 & 14232\\
   21904 & 24480 &  &  & & & & \\
   \hline
\end{tabular}
\end{center}
\end{table}

\begin{theorem}\label{thm_RBIBD-constructions_prime}
If $v$ and $k$ are both powers of the same prime, then a Steiner \mbox{$2$-$(v+1,\{k,k+1\},1)$} 
design exists if and only if $(v-1) \equiv 0$ $($\emph{mod} $(k-1))$ and $v \equiv 0$ $($\emph{mod} $k)$.
\end{theorem}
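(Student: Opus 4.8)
\medskip
\noindent\emph{Proof strategy.} Write $p$ for the common prime and put $v=p^{a}$, $k=p^{b}$ with $a\ge b\ge1$. First I would reduce the two divisibility conditions to one. Since $\gcd(p^{a}-1,p^{b}-1)=p^{\gcd(a,b)}-1$, the condition $(v-1)\equiv0\pmod{k-1}$ is equivalent to $b\mid a$; and $b\mid a$ already forces $b\le a$, hence $k=p^{b}\mid p^{a}=v$, while $v\equiv0\pmod{k}$ by itself only says $b\le a$. So in the present setting the two displayed conditions together amount to the single condition $b\mid a$, i.e.\ to $v$ being a power of $k$, and it suffices to show that a Steiner $2$-$(v+1,\{k,k+1\},1)$ design exists if and only if $b\mid a$.

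For the ``if'' part, assume $b\mid a$ and put $q:=k=p^{b}$, $m:=a/b\ge1$. For $m\ge2$ I would use the classical affine-geometry design on the point set $\mathbb{F}_{q}^{m}$ whose blocks are the affine lines (the cosets of the $1$-dimensional $\mathbb{F}_{q}$-subspaces): each line has exactly $q=k$ points, each pair of points lies on a unique line, and the cosets of a fixed $1$-dimensional subspace form a parallel class, so this is an $\mbox{RBIBD}(q^{m},q,1)=\mbox{RBIBD}(v,k,1)$. For $m=1$ we have $v=k$ and the design with the single block $X$ is a (degenerate) $\mbox{RBIBD}(k,k,1)$. In either case the ``in particular'' clause of Proposition~\ref{Ton} converts the $\mbox{RBIBD}(v,k,1)$ into a Steiner $2$-$(v+1,\{k,k+1\},1)$ design, and this direction is done.

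For the ``only if'' part, suppose a Steiner $2$-$(v+1,\{k,k+1\},1)$ design $\mathcal{D}=(X,\mathcal{B})$ exists. If $b=1$ then $b\mid a$ is automatic, so assume $b\ge2$. As $\mathcal{B}\neq\emptyset$ and no block has more than $v+1$ points, $k\le v+1$, hence $b\le a$ and $k\mid v$; thus $v\equiv0\pmod{k}$ holds for free and everything comes down to showing $b\mid a$. Let $r_{k}(x),r_{k+1}(x)$ be the numbers of blocks of size $k,k+1$ through a point $x$, and $b_{k},b_{k+1}$ the total numbers of such blocks; counting the remaining $v$ points on the blocks through $x$ gives $(k-1)r_{k}(x)+k\,r_{k+1}(x)=v$ for each $x$, and counting pairs gives $\binom{k}{2}b_{k}+\binom{k+1}{2}b_{k+1}=\binom{v+1}{2}$. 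Since $k\mid v$, the first relation forces $r_{k}(x)\equiv0\pmod{k}$ for every $x$. If $a<2b$ this finishes matters: then $(k-1)r_{k}(x)\le v<k(k-1)$ (using $b\ge2$), so $r_{k}(x)=0$ for all $x$, whence $b_{k}=0$ and $\mathcal{D}$ is a Steiner $2$-$(v+1,k+1,1)$ design; the pair count then reads $b_{k+1}=(p^{a}+1)p^{a-b}/(p^{b}+1)$, which, as $\gcd(p^{a-b},p^{b}+1)=1$, is an integer only if $(p^{b}+1)\mid(p^{a}+1)$, and in the range $b\le a<2b$ this happens only for $a=b$, hence $b\mid a$.

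The step I expect to be the genuine obstacle is the remaining range $a\ge2b$: there $r_{k}(x)$ need not vanish, and the elementary double counting above is consistent with $b\nmid a$, so one must press the $p$-power arithmetic harder. The route I would try is an induction on $a$, aiming to extract from a Steiner $2$-$(p^{a}+1,\{p^{b},p^{b}+1\},1)$ design with $a\ge2b$ a Steiner $2$-$(p^{a'}+1,\{p^{b},p^{b}+1\},1)$ design with $b\le a'<a$ and $a'\equiv a\pmod{b}$, so that $b\mid a$ if and only if $b\mid a'$, thereby reducing to the base range already treated; failing a clean reduction of this kind, one would instead invoke known nonexistence results for resolvable $2$-$(v,k,1)$ designs, the two displayed conditions being exactly the standard admissibility conditions for an $\mbox{RBIBD}(v,k,1)$. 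Carrying out one of these two moves rigorously is where the real work of the theorem concentrates.
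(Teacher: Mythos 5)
Your ``if'' direction is correct and is in substance the paper's own argument: under the hypothesis the two congruences amount to $v=k^m$, an $\mbox{RBIBD}(v,k,1)$ exists, and Proposition~\ref{Ton} converts it into the required Steiner $2$-$(v+1,\{k,k+1\},1)$ design. The only difference is that you realize the RBIBD explicitly as the lines of the affine geometry $AG(m,k)$, whereas the paper invokes Greig's theorem that for $v$ and $k$ powers of the same prime the necessary conditions for an $\mbox{RBIBD}(v,k,\lambda)$ are already sufficient; for $\lambda=1$ and prime powers these come to the same thing.

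The genuine gap is your ``only if'' direction: you settle only the range $b\le a<2b$ and explicitly leave $a\ge 2b$ open. Be aware, first, that the paper's proof does not address necessity at all --- it consists solely of Greig plus Proposition~\ref{Ton}, and Proposition~\ref{Ton} has no converse, so the RBIBD admissibility conditions are not thereby shown to be necessary for the Steiner design. Second, neither of the repairs you sketch can succeed, because the necessity claim, read as a statement about arbitrary Steiner $2$-$(v+1,\{k,k+1\},1)$ designs, is false in the range you left open: take an $\mbox{RBIBD}(28,4,1)$ (it exists since $28\equiv 4\ (\mathrm{mod}\ 12)$, and it has $9$ parallel classes), adjoin five new points by adding one new point to every block of five of the parallel classes, and add one further block consisting of the five new points. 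Every pair is covered exactly once and all block sizes lie in $\{4,5\}$, so this is a Steiner $2$-$(33,\{4,5\},1)$ design with $v=32=2^5$ and $k=4=2^2$, although $v-1=31\not\equiv 0\ (\mathrm{mod}\ 3)$. Hence in the range $a\ge 2b$ the designs you are trying to exclude can exist, your proposed induction has nothing to reduce to, and appealing to nonexistence results for resolvable designs is a non sequitur (nonexistence of the RBIBD does not preclude the Steiner design, which need not arise from Proposition~\ref{Ton}). The ``only if'' part of the statement is defensible only under a more restrictive reading --- namely as recording the admissibility conditions of the RBIBD used in the construction --- which is evidently how the paper's one-directional proof treats it.
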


\begin{proof}
It has been shown in~\cite{Greig06} that, for $v$ and $k$  both powers of the same prime, the necessary conditions for the existence of an $\mbox{RBIBD}(v,k,\lambda)$ are sufficient. Hence, the result follows via Proposition~\ref{Ton} when considering an $\mbox{RBIBD}(v,k,1)$.\qed
\end{proof}


\subsection{3-Design-Constructions}

Based on further results on 2-resolvable Steiner quadruple systems as well as on a large set of Steiner triple systems, we obtain this way three infinite classes for $t > 2$. 

\begin{theorem}\label{thm_3-design-constructions1}

There exists

\begin{enumerate}

\item[$\bullet$] a Steiner \mbox{$3$-$(2 \cdot {31}^e+3,\{4,5\},1)$} design for any positive integer $e$,

\item[$\bullet$] a Steiner \mbox{$3$-$(2 \cdot {127}^e+3,\{4,5\},1)$} design for any positive integer $e$.

\end{enumerate}

\end{theorem}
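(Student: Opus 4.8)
The plan is to mimic the structure of Theorem~\ref{Ton3}, whose second item is a Steiner \mbox{$3$-$(2\cdot 7^e+3,\{4,5\},1)$} design obtained from a $2$-resolvable Steiner quadruple system on $2\cdot 7^e+2$ points via Proposition~\ref{Ton} (applied with $t=3$: one extends the blocks of a Steiner $2$-$(v,4,1)$ subdesign by a new point). Here $7$ has been replaced by $31$ and $127$; the key observation is that $7$, $31$ and $127$ are precisely the Mersenne primes $2^3-1$, $2^5-1$ and $2^7-1$. So the first thing I would do is recall exactly which construction of $2$-resolvable SQS underlies the $7^e$ case — namely the Baker~\cite{Bak76}/Semakov et al.~\cite{Sem73} result together with Teirlinck's~\cite{Teir94} recursive machinery — and check that the only arithmetic input it needs is that $2\cdot 7+2=16=2^4$ is a power of $2$ of a suitable form, equivalently that $7=2^3-1$ is a Mersenne prime.

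Concretely, I would argue as follows. First, by Baker~\cite{Bak76} (equivalently Semakov et al.~\cite{Sem73}) there is a $2$-resolvable Steiner quadruple system on $2^n$ points for every appropriate $n$; in particular $2\cdot q+2=2^{n}$ with $q=2^{n-1}-1$ a Mersenne prime gives a $2$-resolvable SQS$(2q+2)$ whose parallel-class-like structure partitions its blocks into Steiner $2$-$(2q+2,4,1)$ designs. Second, Teirlinck's recursive construction~\cite{Teir94} (the same one invoked for the $7^e$ family) takes a $2$-resolvable SQS$(w)$ and, using $2$-resolvable SQS of smaller prime-power-related orders as ingredients, produces a $2$-resolvable SQS$(w')$ for $w'$ of the form $2\cdot q^e+2$; running it with $q=31$ and $q=127$ yields $2$-resolvable Steiner quadruple systems on $2\cdot 31^e+2$ and $2\cdot 127^e+2$ points for every $e\ge 1$. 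Third, having a $2$-resolvable SQS$(2q^e+2)$ means its block-set $\mathcal{B}$ contains the block-set $\mathcal{B}'$ of a Steiner $2$-$(2q^e+2,4,1)$ subdesign; applying Proposition~\ref{Ton} with $t=3$, $v=2q^e+2$, $k=4$ — extend every block of $\mathcal{B}'$ by one new point $x$ and keep the blocks of $\mathcal{B}\setminus\mathcal{B}'$ unchanged — gives a Steiner \mbox{$3$-$(2q^e+3,\{4,5\},1)$} design. Substituting $q=31$ and $q=127$ gives the two asserted infinite classes.

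The main obstacle, and the place where genuine care is needed, is verifying that Teirlinck's recursion actually applies with base prime $31$ or $127$ rather than only $7$: one must confirm that the divisibility/congruence side conditions in~\cite{Teir94} are satisfied, which is exactly where the Mersenne-prime property $q=2^{n-1}-1$ enters (it guarantees $2q+2$ is the right power of $2$ and that the intermediate orders $2q^i+2$ thread through the recursion). I would therefore spend the bulk of the proof checking that the hypotheses of the Baker--Semakov and Teirlinck theorems hold verbatim for $q\in\{31,127\}$, after which the extension step is the routine application of Proposition~\ref{Ton} already used in Theorem~\ref{Ton3}. (It is worth noting in passing that $2^{13}-1$, etc., would give further classes, but the statement restricts to $31$ and $127$, presumably because those are the cases for which the recursive ingredients are explicitly available.)
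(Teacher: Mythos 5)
Your proposal is correct and follows essentially the same route as the paper: obtain the $2$-resolvable Steiner quadruple systems $3$-$(2\cdot 31^e+2,4,1)$ and $3$-$(2\cdot 127^e+2,4,1)$ from Teirlinck~\cite{Teir94} and then extend the $2$-$(v,4,1)$ subdesign by a new point via Proposition~\ref{Ton}. The only difference is that the paper simply cites~\cite{Teir94} as directly supplying these systems for $q=31,127$, so the hypothesis-checking of Teirlinck's recursion that you flag as the main work is subsumed in the citation rather than carried out in the proof.
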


\begin{proof}
By~\cite{Teir94}, for any positive integer $e$ there exist $2$-resolvable Steiner quadruple systems \mbox{$3$-$(2 \cdot 31^e+2,4,1)$}  and \mbox{$3$-$(2 \cdot 127^e+2,4,1)$}. Thus the constructions follow via Proposition~\ref{Ton}.\qed
\end{proof}

\begin{theorem}\label{thm_3-design-constructions2}
There exists a Steiner \mbox{$3$-$(v+1,\{3,4\},1)$} design if and only if $v \equiv 1$ or $3$ $($\emph{mod} $6)$, $v\neq 7$.
\end{theorem}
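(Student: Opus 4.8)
The plan is to prove the two implications separately, with Proposition~\ref{Ton} supplying the construction in the ``if'' direction and a large set of Steiner triple systems serving as the organizing structure; the whole subtlety lies in the single exceptional value $v=7$.

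\emph{Sufficiency.} Suppose $v \equiv 1$ or $3 \pmod{6}$ and $v\neq 7$. By the theorem of Lu Jiaxi, with the last six admissible orders completed by Teirlinck, a \emph{large set} of pairwise disjoint Steiner triple systems $\mbox{STS}(v)$ exists for exactly this range of $v$. I would fix one such large set, say $S_1,\dots,S_{v-2}$ on a ground set $X$ of size $v$, so that the $\binom{v}{3}$ triples of $X$ are partitioned into these $v-2$ copies of $\mbox{STS}(v)$, and then take $\mathcal{D}$ to be the complete Steiner $3$-$(v,3,1)$ design (all triples of $X$) and $\mathcal{D}' := S_1 \subseteq \mathcal{D}$, which is a Steiner $2$-$(v,3,1)$ subdesign. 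Proposition~\ref{Ton} with $t=3$ and $k=3$ then produces a Steiner $3$-$(v+1,\{3,4\},1)$ design on $X\cup\{x\}$: each block of $S_1$, enlarged by the new point $x$, becomes a block of size $4$, while the triples lying in $S_2\cup\dots\cup S_{v-2}$ are retained as the blocks of size $3$; the required incidence verification is precisely the one already carried out in Proposition~\ref{Ton}. This construction also displays the blocks of size $3$ as a disjoint union of Steiner triple systems, the structural feature that is unavailable exactly at $v=7$.

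\emph{Necessity.} Conversely, assume a Steiner $3$-$(v+1,\{3,4\},1)$ design exists on a $(v+1)$-set $Y$, and delete a point $x$. The blocks through $x$, with $x$ removed, cover every pair of $Y\setminus\{x\}$ exactly once; separating the triples of $Y\setminus\{x\}$ that are produced by the size-$4$ blocks through $x$ from those handled by the blocks avoiding $x$, one is led back to a family of Steiner triple systems on the $v$-set $Y\setminus\{x\}$. In particular $\mbox{STS}(v)$ must exist, forcing $v\equiv 1$ or $3\pmod{6}$; and for $v=7$ this family is forced to be a large set of pairwise disjoint $\mbox{STS}(7)$, which is classically known not to exist. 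Hence $v\neq 7$.

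The step I expect to be the genuine obstacle is the $v=7$ case in the necessity direction: one has to show that a Steiner $3$-$(8,\{3,4\},1)$ design cannot avoid encoding a large set of pairwise disjoint $\mbox{STS}(7)$ --- concretely, that the traces of its size-$4$ blocks through the deleted point must form seven pairwise disjoint Fano planes exhausting all $35$ triples of a $7$-set --- so that the classical non-existence of such a large set can be invoked. The remaining ingredients --- the divisibility conditions for triple systems, the incidence check already packaged in Proposition~\ref{Ton}, and the recursive spectrum of large sets of $\mbox{STS}(v)$ --- amount to routine bookkeeping.
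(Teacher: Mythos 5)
Your sufficiency argument coincides with the paper's: the paper likewise quotes the Lu--Teirlinck spectrum for large sets of Steiner triple systems and then applies Proposition~\ref{Ton}, taking $\mathcal{D}$ to be the complete Steiner \mbox{$3$-$(v,3,1)$} design (all triples) and $\mathcal{D}'$ one triple system; as you implicitly use only $S_1$, note that this direction in fact needs just a single \mbox{STS}$(v)$, not the large set.

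The genuine gap is in your necessity half, and it cannot be closed along the lines you sketch. First, deleting a point $x$ does not lead back to Steiner triple systems: the blocks through $x$, with $x$ removed, form a pairwise balanced design with blocks of sizes $2$ and $3$ (every $3$-block through $x$ leaves a pair), so neither the congruence $v\equiv 1$ or $3 \pmod 6$ nor anything about large sets follows from this trace. Second, the count in your key claim is off: the $4$-blocks through $x$ cover pairwise disjoint pairs of the remaining $v$ points, so for $v=7$ there are at most seven of them and their traces give at most the seven triples of a single Fano plane; they can never exhaust all $35$ triples (a large set of \mbox{STS}$(7)$ would consist of five, not seven, disjoint Fano planes in any case). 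Third, and decisively, the deferred step is not provable under the notion of design used here: applying Proposition~\ref{Ton} with $\mathcal{D}$ the complete Steiner \mbox{$3$-$(7,3,1)$} design and $\mathcal{D}'$ a Fano plane yields a Steiner \mbox{$3$-$(8,\{3,4\},1)$} design (seven $4$-blocks through the new point together with the remaining twenty-eight triples), so the classical nonexistence of a large set of \mbox{STS}$(7)$ cannot be invoked to exclude $v=7$ by any argument of this shape. For comparison, the paper's own proof consists solely of the Lu--Teirlinck citation plus Proposition~\ref{Ton}, i.e.\ it supplies the construction exactly as you do and gives no separate necessity argument; the ``only if'' burden you correctly identify as the crux is therefore not discharged there either, but your proposed route to it fails for the concrete reasons above.
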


\begin{proof}
A large set of Steiner triple systems \mbox{$2$-$(v,3,1)$}  exist if and only if $v \equiv 1$ or $3$ (mod $6$), $v\neq 7$, due to the work of Lu~\cite{Lu83,Lu84} and Teirlinck~\cite{Teir91}. Applying Proposition~\ref{Ton} again yields the result.\qed
\end{proof}


\section{Conclusion}\label{Concl}

Group testing algorithms are very useful tools for genetic screening. For practical reasons, it is desirable to have at most two-stage group testing procedures. Building on recent work by  Levenshtein and Tonchev, we have constructed in this paper new infinite classes of combinatorial structures, the existence of which are essential for attaining the minimum number of individual tests at the second stage of a two-stage disjunctive testing algorithm. 
This results in efficient two-stage disjunctive group testing algorithms suited for faster and less-expensive DNA library 
screening and other large scale biological group testing efforts.

\section*{Acknowledgments}
The author thanks the anonymous referees for their careful reading and suggestions that helped improving the presentation of the paper.

\end{document}